\theoremstyle{definition}
\newtheorem{defn}{Definition}
\newtheorem{example}{Example}
\theoremstyle{plain}
\newtheorem{thm}{Theorem}
\newtheorem{prop}{Proposition}
\theoremstyle{remark}
\newtheorem{rem}{Remark}
\title{Interval centred form for proving stability of non-linear discrete-time systems}
\author{Auguste Bourgois
\institute{Lab-STICC\\ENSTA Bretagne\\Brest, France}
\institute{Forssea Robotics\\Paris, France}
\email{auguste.bourgois@ensta-bretagne.org}
\and
Luc Jaulin
\institute{Lab-STICC\\ENSTA Bretagne\\Brest, France}
\email{luc.jaulin@ensta-bretagne.fr}
}
\begin{document}
\maketitle

\begin{abstract}
In this paper, we propose a new approach to prove stability
of non-linear discrete-time systems. After introducing the new concept of stability contractor, we show that the interval
centred form plays a fundamental role in this context and makes
it possible to easily prove asymptotic stability of a discrete
system. Then, we illustrate the principle of our approach through
theoretical examples. Finally, we provide two practical examples using
our method~: proving stability of a localisation system and
that of the trajectory of a robot.
\end{abstract}

\section{Introduction}
Proving properties of Cyber Physical Systems (CPS) is an important topic
that should be considered when designing reliable systems \cite{Asarin07,Frehse:08,Ramdani:Nedialkov11,taha:15:acumen}.
Among those properties, \emph{stability} is often wanted for a dynamical system~: achieving stability around a given setpoint in the state space of the latter is one of the aims of control theory. Proving stability of a dynamical system can be done rigorously \cite{jaulinsliding}, which is of major importance when applied to real-life systems. Indeed, a stable system is considered safe, since its behaviour is predictable.

Let us recall the definition of stability of a dynamical system.
Consider the non-linear discrete-time system
\begin{equation}
\begin{array}{ccl}
\mathbf{x}_{k+1} & = & \mathbf{f}(\mathbf{x}_{k})\end{array}\label{eq:seq1}
\end{equation}
According to Lyapunov's definition of stability \cite{fantoni:02,slotine91}, the system (\ref{eq:seq1}) is\emph{ stable} if
\begin{equation}
\forall\varepsilon>0,\:\exists\delta>0,\:\|\mathbf{x}_{0}\|<\delta\implies\forall k\geq0,\|\mathbf{x}_{k}\|<\varepsilon
\end{equation}
The system is\emph{ asymptotically stable} (or \emph{Lyapunov stable}) if there exists a neighbourhood of $\mathbf{0}$ such
that any initial state $\mathbf{x}_{0}$ taken in this neighbourhood
yields a trajectory which converges to $\mathbf{0}$,
\begin{equation}
\label{eq:asymptotic-stab}
\exists\delta>0,\:\|\mathbf{x}_{0}\|<\delta\implies\lim_{k\rightarrow\infty}\|\mathbf{x}_{k}\|=0
\end{equation}
Finally, the system is\emph{ exponentially stable} if for a given norm $\|\cdot\|$
\begin{equation}
\label{eq:exponential-stab}
\exists\delta>0,\:\exists\alpha\geq 0,\:\exists\beta\geq 0,\:\|\mathbf{x}_{0}\|<\delta\implies\forall k \geq 0,\:\|\mathbf{x}_{k}\|\leq\alpha\|\mathbf{x}_{0}\|e^{-\beta k}
\end{equation}

A classical method to prove stability of a system is to linearise the latter around $\mathbf{0}$
and check if the eigenvalues are inside the unit disk. However, due the inherent
uncertainties of a real-life system's model, no guarantee can
be obtained without using interval analysis \cite{Rump00}. The Jury
criterion \cite{JaulinBurger99} can also be used on the linearised
system in this context, but again, interval computation has to be
performed to get a proof of stability \cite{Rohn:stab:mat:96}. Moreover,
to our knowledge, none of the existing methods is able to give an
approximation for the neighbourhoods $\delta$ and $\varepsilon$ used
in Lyapunov's definition. Now, finding values for $\delta$ and
$\varepsilon$ is needed in practice, for instance to initialize algorithms
which approximate basins of attraction \cite{Lhommeau:Viability:Incinco07,ratschan_stab,SaintPierre02}
or reachable sets \cite{lemezomaze19}.

This paper proposes an original approach to prove Lyapunov stability of a non-linear discrete system,
but also to find values for $\delta$ and $\varepsilon$. It uses
the centred form, a classical concept in interval analysis \cite{Moore66}.
Moreover, it does not need the introduction of any Lyapunov function,
Jury criterion, linearisation, or any other classical tool used in
control theory.

Section \ref{sec:interval-analysis} briefly presents interval analysis and the notations used in the rest of this paper. Section \ref{sec:stab:ctr} introduces the new notion of \emph{stability contractor} and gives a theorem which explains why this concept is
useful for stability analysis. Section \ref{sec:centered-form} recalls
the definition of the centred form. It provides
a recursive version of the centred form in the case where the function
to enclose is the solution of a recurrence equation. It also shows
that the centred form can be used to build stability contractors.
Section \ref{sec:stability} shows that the approach is able to reach
a conclusion for a large class of stable systems and provides a procedure
for proving stability. Some examples are given to illustrate graphically
the properties of the approach. Section \ref{sec:conclusion} concludes
the paper and proposes some perspectives.

\section{Introduction to interval analysis}
\label{sec:interval-analysis}
The method presented in this paper is based on interval analysis. In short, interval analysis is a field of mathematics where intervals, \textit{i.e.} connected subsets of $\mathbb{R}$, are used instead of real numbers to perform computations. Doing so allows to enclose all types of uncertainties (from floating-point round-off to modelling errors) of a system and therefore yield a guaranteed enclosure for the solution of a problem related to that system. In this section, we briefly introduce the notations and important concepts used later in this paper. More details about interval analysis and its applications can be found in \cite{JaulinBook01,moore2009introduction}.

An interval is a set delimited by a lower bound $x^-$ and an upper bound $x^+$ such that $x^-\leq x^+$~:\[[x]=[x^-,x^+]\]
Intervals can be stacked into vectors, and are thus denoted by \[[\mathbf{x}]=\left([x_1],[x_2],\dots,[x_n]\right)\]
We write $[u,v]^{\times n}$ the interval vector of size $n$, all the components of which are equal to $[u,v]$.

Vectors of intervals are often called \emph{boxes} or \emph{interval vectors}. The set of axis-aligned boxes of $\mathbb{R}^n$ is denoted by $\mathbb{IR}^n$. Similarly, interval vectors can be concatenated into \emph{interval matrices}.

Intervals, interval vectors (or matrices) can be multiplied by a real $\lambda$ as such~:
\[\lambda[x^-,x^+]=\begin{cases}
[\lambda x^-, \lambda x^+] & \text{if}\;\lambda\geq0\\
[\lambda x^+, \lambda x^-] & \text{if}\;\lambda<0\\
\end{cases}\]

We denote by $w\left([x]\right)$ the width of $[x]$~:\[w\left([x]\right)=x^+ - x^-\]
The width of an interval vector $[\mathbf{x}]$ is given by
\[w\left([\mathbf{x}]\right)=\underset{i}{\max}\left(w\left([x_i]\right)\right)\]

The absolute value of an interval $[x]$ is \[\left\lvert[x]\right\rvert=\max\left(\left\lvert x^-\right\rvert, \left|x^+\right|\right)\]
And the norm of an interval vector $[\mathbf{x}]$ is defined as follows \[\left\lVert[\mathbf{x}]\right\rVert=\underset{i}{\max}\left\lvert[x_i]\right\rvert\]

Later in this paper, we will use the following implication
\begin{equation}
\label{eq:implication}
\forall [\mathbf{a}],\,[\mathbf{b}] \in \mathbb{IR}^n,\,[\mathbf{a}]\subset[\mathbf{b}]\implies\left\lVert[\mathbf{a}]\right\rVert<\left\lVert[\mathbf{b}]\right\rVert
\end{equation}

The usual arithmetic operators ($+,-,\times,/$) can be defined over intervals (as well as boxes and interval matrices). Operations involving intervals, interval vectors and interval matrices can therefore be naturally deduced from their real counterpart.

Extending a real function to intervals (and equivalently to interval vectors/matrices) can also be achieved as follows~:
\[f\left([x]\right)=\left\{f(x),x\in[x]\right\}\]

Except in trivial cases, $f\left([x]\right)$ usually cannot be written as an interval, whence the use of \emph{inclusion functions}. An inclusion function $[f]\left([x]\right)$ associated with $f\left([x]\right)$ yields an interval (or interval vector/matrix) enclosing the set $f\left([x]\right)$~:\[f\left([x]\right)\subset [f]\left([x]\right)\]
$[f]$ is said to be \emph{minimal} if $[f]\left([x]\right)$ is the smallest interval (or interval vector/matrix) enclosing the set $f\left([x]\right)$ (see Figure \ref{fig:inclusion}). The minimal inclusion function associated with $f$ will be denoted by $[f\left([x]\right)]$.
\begin{figure}[h]
	\centering
	\includegraphics[width=0.6\textwidth]{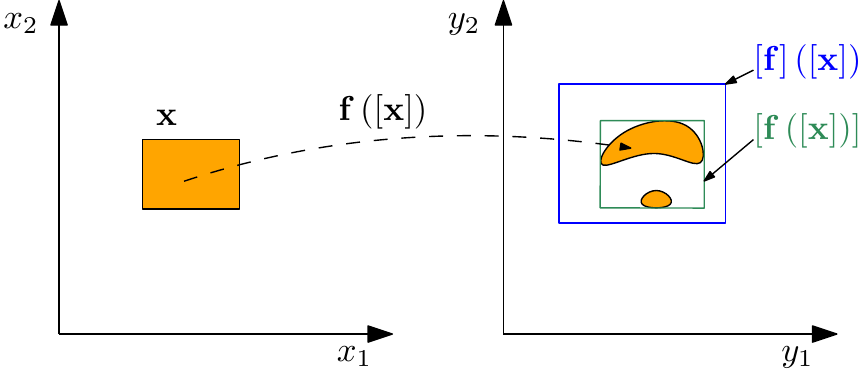}
	\caption{Interval function, inclusion function \& minimal inclusion function}
	\label{fig:inclusion}
\end{figure}
An inclusion function is said to be \emph{natural} when it is expressed by replacing its variables and its elementary functions and operators by their interval counterparts.
\begin{example}
	Consider the function $f:\mathbb{R}^2\rightarrow\mathbb{R}$ such that for $\mathbf{x}=\left(x_1,x_2\right)\in\mathbb{R}^2$ \[f(\mathbf{x})=\sin\left(x_1\right)+\exp\left(x_2\right)\]
	The natural inclusion function $[f]$ of $f$ is~:\[[f]([\mathbf{x}])=[\sin]\left([x_1]\right)+[\exp]\left([x_2]\right)\]
	where $[\sin]$ and $[\exp]$ are the inclusion functions of $\sin$ and $\exp$.
\end{example}

\section{Stability contractor\label{sec:stab:ctr}}
In this section, we present the concept of \emph{stability contractor}, a tool that can be used to rigorously prove the stability of a dynamical system. The rigour of the method comes from the use of interval analysis.

The following new definition adapts the definition of a contractor,
as given in \cite{Chabert09}, to stability analysis.
\begin{defn}
	\label{def:stab-contract}
	Consider a box $[\mathbf{x}_{0}]$ of $\mathbb{R}^{n}$. A\emph{ stability
		contractor} $\boldsymbol{\Psi}:\mathbb{IR}^{n}\rightarrow\mathbb{IR}^{n}$
	of rate $\alpha<1$ is an operator which satisfies
	\begin{equation}
	\begin{array}{ccc}
	(i) & [\mathbf{a}]\subset[\mathbf{b}]\implies\boldsymbol{\Psi}([\mathbf{a}])\subset\boldsymbol{\Psi}([\mathbf{b}]) & 
	\text{(monotonicity)}\\
	(ii) & \boldsymbol{\Psi}([\mathbf{a}])\subset[\mathbf{a}] & \text{(contractance)}\\
	(iii) & \boldsymbol{\Psi}(\mathbf{0})=\mathbf{0} & \text{(equilibrium)}\\
	(iv) & \boldsymbol{\Psi}([\mathbf{a}])\subset\alpha\cdot[\mathbf{a}]\implies\forall k \geq 1,\,\boldsymbol{\Psi}^{k}([\mathbf{a}])\subset\alpha^{k}\cdot[\mathbf{a}] & \text{(convergence)}
	\end{array}\label{eq:def1}
	\end{equation}
	for all boxes $[\mathbf{a}],[\mathbf{b}]$ inside $[\mathbf{x}_{0}]$. For $k\geq1$ ${\Psi}^{k}$ denotes the iterated function $\underset{k}{\underbrace{\boldsymbol{\Psi}\circ\dots\circ\boldsymbol{\Psi}}}$. If $k=0$, $\boldsymbol{\boldsymbol{\Psi}}^{0}$ denotes the identity function.
\end{defn}

\begin{example}
	If $[x]$ is an interval, the operator $[x]\mapsto[x]\cap0.9\cdot[x]$
	is a stability contractor whereas the operator $[x]\mapsto0.9\cdot[x]$
	is not.
\end{example}

\begin{prop}
	\label{prop:stab-contract}
	If $\boldsymbol{\Psi}$ is a stability contractor of rate $\alpha<1$, then
	\begin{equation}
	\boldsymbol{\Psi}([\mathbf{x}])\subset\alpha\cdot[\mathbf{x}]\implies\lim_{k\rightarrow\infty}\boldsymbol{\Psi}^{k}([\mathbf{x}])=\mathbf{0}
	\end{equation}
\end{prop}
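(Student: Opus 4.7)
The plan is to invoke property (iv) of the stability contractor directly on the hypothesis, then translate the resulting chain of inclusions into a statement about norms using implication (\ref{eq:implication}).

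First I would apply property (iv) with $[\mathbf{a}]=[\mathbf{x}]$. The hypothesis $\boldsymbol{\Psi}([\mathbf{x}])\subset\alpha\cdot[\mathbf{x}]$ is exactly the antecedent of (iv), so we immediately obtain
\begin{equation*}
\forall k\geq 1,\quad \boldsymbol{\Psi}^{k}([\mathbf{x}])\subset\alpha^{k}\cdot[\mathbf{x}].
\end{equation*}
This is the core of the proof; the rest is just converting this geometric statement into the announced limit.

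Next, I would pass to norms. Applying implication (\ref{eq:implication}) together with the homogeneity of the norm under multiplication by the non-negative scalar $\alpha^k$, each inclusion yields
\begin{equation*}
\|\boldsymbol{\Psi}^{k}([\mathbf{x}])\|\leq\alpha^{k}\,\|[\mathbf{x}]\|.
\end{equation*}
Since $\alpha<1$, the right-hand side tends to $0$ as $k\to\infty$, so $\|\boldsymbol{\Psi}^{k}([\mathbf{x}])\|\to 0$, which is the meaning of $\lim_{k\to\infty}\boldsymbol{\Psi}^{k}([\mathbf{x}])=\mathbf{0}$ for a sequence of boxes (each box collapses to the singleton $\{\mathbf{0}\}$ in the Hausdorff sense induced by $\|\cdot\|$).

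The proof is essentially immediate because property (iv) was designed to carry most of the load. The only point requiring a small amount of care is the interpretation of the limit: one must spell out that convergence of an interval vector sequence to the degenerate box $\mathbf{0}$ is precisely the statement that its norm tends to zero, and confirm that the norm defined earlier on boxes is compatible with scalar multiplication (i.e. $\|\alpha^k[\mathbf{x}]\|=\alpha^k\|[\mathbf{x}]\|$ for $\alpha^k\geq 0$). There is no real obstacle here, since both facts follow directly from the definitions given in Section \ref{sec:interval-analysis}.
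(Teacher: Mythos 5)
Your proposal is correct and follows essentially the same route as the paper: both invoke property (iv) of Definition \ref{def:stab-contract} with the hypothesis as antecedent and then pass to the limit as $\alpha^{k}\rightarrow 0$. You merely make explicit the final norm argument (via implication (\ref{eq:implication}) and homogeneity of the norm) that the paper leaves implicit, which is a welcome bit of extra rigour but not a different proof.
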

\begin{proof}
	Let $[\mathbf{x}]\in\mathbb{IR}^n$ and $\boldsymbol{\Psi}$ denote a stability contractor such that $\boldsymbol{\Psi}([\mathbf{x}])\subset\alpha\cdot[\mathbf{x}]$.
	Then, according to Equation (\ref{eq:def1}, iv),
	\begin{align*}
	\boldsymbol{\Psi}([\mathbf{x}])\subset\alpha\cdot[\mathbf{x}]&\implies\boldsymbol{\Psi}^{k}([\mathbf{x}])\subset\alpha^{k}\cdot[\mathbf{x}]\\
	&\implies\lim_{k\rightarrow\infty}\boldsymbol{\Psi}^{k}([\mathbf{x}])=\mathbf{0}
	\end{align*}
\end{proof}

A consequence of this proposition is that getting a stability contractor
allows us to prove stability of a system without performing an
infinitely long set-membership simulation. It suffices to have one
box $[\mathbf{x}]\ni\mathbf{0}$ such that $\boldsymbol{\Psi}([\mathbf{x}]))\subset\alpha\cdot[\mathbf{x}]$,
$\alpha<1$ to conclude that the system
is asymptotically stable for all $\mathbf{x}\in[\mathbf{x}]$. In other words, the system is proven to be Lyapunov stable in the neighbourhood $[\mathbf{x}_0]$.

Now, if one can build a stability contractor $\boldsymbol{\Psi}$ for a given dynamical system, then proving Lyapunov stability of the latter for a given initial condition $[\mathbf{x}]$ comes down to applying proposition \ref{prop:stab-contract}. Building such a contractor is addressed in the next section.

\section{Centred form of an interval function}
\label{sec:centered-form}
The aim of this section is to present a general method for building a stability contractor for discrete dynamical systems. It is based on the concept on centred form, proposed by Moore in \cite{Moore79}. Additionally, it proposes an algorithm to compute iteratively the centred form of an iterated function.

Consider a function $\mathbf{f}:\mathbb{R}^{n}\rightarrow\mathbb{R}^{n}$, with
a Jacobian matrix $\mathbf{J}(\mathbf{x})=\frac{d\mathbf{f}}{d\mathbf{x}}(\mathbf{x})$.
Consider a box $[\mathbf{x}]$ and one point $\bar{\mathbf{x}}$ in
$[\mathbf{x}]$. For simplicity, and without loss of generality, we assume that $\bar{\mathbf{x}}=\mathbf{0}\in[\mathbf{x}]$ and that $\mathbf{f}(\mathbf{0})=\mathbf{0}$.
\begin{rem}
	\label{rem:centering-probem}
	If this condition is not satisfied,\emph{ i.e.} $\bar{\mathbf{x}}\neq\mathbf{0}$
	or $\mathbf{f}(\mathbf{0})\neq\mathbf{0}$, the problem $\mathbf{y}=\mathbf{f}(\mathbf{x}),\mathbf{x}\in[\mathbf{x}]$
	can be transformed into an equivalent problem satisfying the latter~:
	\begin{equation}
	\begin{array}{ccc}
	\mathbf{y}=\mathbf{f}(\mathbf{x}) & \Leftrightarrow & \underset{=\mathbf{z}}{\underbrace{\mathbf{y}-\mathbf{f}(\bar{\mathbf{x}})}}=-\mathbf{f}(\bar{\mathbf{x}})+\mathbf{f}(\underset{=\mathbf{p}}{\underbrace{\mathbf{x}-\bar{\mathbf{x}}}}+\bar{\mathbf{x}})\\
	& \Leftrightarrow & \mathbf{z}=\underset{=\mathbf{g}(\mathbf{p})}{\underbrace{-\mathbf{f}(\bar{\mathbf{x}})+\mathbf{f}(\mathbf{p}+\bar{\mathbf{x}})}}
	\end{array}
	\end{equation}
	\textit{i.e.}
	\begin{equation}
	\left\{ \begin{array}{ccc}
	\mathbf{y} & = & \mathbf{z}+\mathbf{f}(\bar{\mathbf{x}})\\
	\mathbf{z} & = & \mathbf{g}(\mathbf{p})\\
	\mathbf{p} & = & \mathbf{x}-\bar{\mathbf{x}}
	\end{array}\right.
	\end{equation}
\end{rem}
Now, consider the new problem $\mathbf{z}=\mathbf{g}(\mathbf{p}),\mathbf{p}\in[\mathbf{p}]$ where $[\mathbf{p}]=[\mathbf{x}]-\bar{\mathbf{x}}$. Since $\mathbf{g}(\mathbf{0})=\mathbf{0}$ and $\mathbf{0}\in[\mathbf{p}]$, the new problem satisfies the condition stated above.

Let us recall the definition of the centred form,as given by \cite{Moore79}~:
\begin{defn}
	The centred form $\mathbf{f}_{c}$ associated to
	the function $\mathbf{f}$ is given by
	\begin{equation}
	\label{eq:centred-form}
	[\mathbf{f}_{c}]([\mathbf{x}])=\left(\left[\mathbf{J}\right]([\mathbf{x}])\right)\cdot[\mathbf{x}]
	\end{equation}
	where $\left[\mathbf{J}\right]$ is the natural extension of $\mathbf{J}$.
\end{defn}
\begin{prop}
	\label{prop:order:fc}
	According to \cite{Moore79}, for all $[\mathbf{x}]\in\mathbb{IR}^n$,
	\begin{equation}
	\mathbf{f}([\mathbf{x}])\subset[\mathbf{f}_{c}]([\mathbf{x}])
	\end{equation}
	
	Additionally, the centred form tends towards the minimal inclusion function when $[\mathbf{x}]$ is sufficiently small~:
	\begin{equation}
	w\left([\mathbf{f}_{c}]\left([\mathbf{x}]\right)\right)-w\left([\mathbf{f}\left([\mathbf{x}]\right)]\right)=o\left(w\left([\mathbf{x}]\right)\right)\;\text{as}\;w\left([\mathbf{x}]\right)\rightarrow 0 
	\end{equation}
\end{prop}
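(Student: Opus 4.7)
The plan is to prove the two assertions separately under the standing assumption $\mathbf{f}(\mathbf{0})=\mathbf{0}$ with $\mathbf{0}\in[\mathbf{x}]$ (which, by Remark \ref{rem:centering-probem}, is without loss of generality). The inclusion follows from a mean value argument, while the asymptotic sharpness estimate follows from a first-order Taylor expansion combined with a Lipschitz estimate on the Jacobian.

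For the inclusion part, I would invoke the integral form of the mean value theorem. Since the box $[\mathbf{x}]$ is convex and contains $\mathbf{0}$, the segment $\{t\mathbf{x}:t\in[0,1]\}$ lies inside $[\mathbf{x}]$ for any $\mathbf{x}\in[\mathbf{x}]$, which lets me write
\[
\mathbf{f}(\mathbf{x})=\mathbf{f}(\mathbf{x})-\mathbf{f}(\mathbf{0})=\left(\int_{0}^{1}\mathbf{J}(t\mathbf{x})\,dt\right)\mathbf{x}.
\]
Each entry of the matrix $\mathbf{M}:=\int_{0}^{1}\mathbf{J}(t\mathbf{x})\,dt$ is a convex combination of values of the corresponding entry of $\mathbf{J}$ evaluated on $[\mathbf{x}]$, so $\mathbf{M}\in[\mathbf{J}]([\mathbf{x}])$. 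Consequently $\mathbf{f}(\mathbf{x})=\mathbf{M}\,\mathbf{x}\in[\mathbf{J}]([\mathbf{x}])\cdot[\mathbf{x}]=[\mathbf{f}_{c}]([\mathbf{x}])$, and taking the union over $\mathbf{x}\in[\mathbf{x}]$ yields $\mathbf{f}([\mathbf{x}])\subset[\mathbf{f}_{c}]([\mathbf{x}])$.

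For the asymptotic estimate, set $\varepsilon=w([\mathbf{x}])$ and assume $\mathbf{f}$ is $C^{2}$ so that $\mathbf{J}$ is Lipschitz on $[\mathbf{x}]$. Then the natural extension satisfies $[\mathbf{J}]([\mathbf{x}])\subset\mathbf{J}(\mathbf{0})+[\boldsymbol{\Delta}]$ with $\|[\boldsymbol{\Delta}]\|=O(\varepsilon)$, hence
\[
[\mathbf{f}_{c}]([\mathbf{x}])\subset\mathbf{J}(\mathbf{0})\cdot[\mathbf{x}]+[\boldsymbol{\Delta}]\cdot[\mathbf{x}],
\]
whose width is $w(\mathbf{J}(\mathbf{0})\cdot[\mathbf{x}])+O(\varepsilon^{2})$ because the second summand has norm $O(\varepsilon^{2})$. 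Conversely, the first-order Taylor expansion $\mathbf{f}(\mathbf{x})=\mathbf{J}(\mathbf{0})\mathbf{x}+O(\|\mathbf{x}\|^{2})$ shows that $\mathbf{f}([\mathbf{x}])$ and the linear image $\mathbf{J}(\mathbf{0})\cdot[\mathbf{x}]$ are within Hausdorff distance $O(\varepsilon^{2})$, so $w([\mathbf{f}([\mathbf{x}])])=w(\mathbf{J}(\mathbf{0})\cdot[\mathbf{x}])+O(\varepsilon^{2})$. Subtracting, the common linear term cancels and only an $O(\varepsilon^{2})=o(\varepsilon)$ residual remains.

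The main obstacle lies in the cancellation step of the second part: one has to verify that the minimal enclosure $[\mathbf{f}([\mathbf{x}])]$ shares the same leading-order width as $\mathbf{J}(\mathbf{0})\cdot[\mathbf{x}]$, not merely that it is dominated by it. This requires tracking both upper and lower bounds of the coordinate projections of $\mathbf{f}([\mathbf{x}])$ through the Taylor remainder, and implicitly relies on $\mathbf{f}\in C^{2}$ to guarantee a uniform $O(\varepsilon^{2})$ remainder on $[\mathbf{x}]$. Everything else is a straightforward application of interval arithmetic together with the inclusion already proved.
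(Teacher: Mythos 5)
The paper does not actually prove this proposition: it is stated as a known result and attributed wholesale to Moore \cite{Moore79}. Your proposal therefore goes beyond the paper by supplying a self-contained argument, and the argument is essentially the classical one. The first part (integral mean value form $\mathbf{f}(\mathbf{x})=\bigl(\int_{0}^{1}\mathbf{J}(t\mathbf{x})\,dt\bigr)\mathbf{x}$, entrywise membership $\mathbf{M}\in[\mathbf{J}]([\mathbf{x}])$ by convexity of intervals, then inclusion monotonicity of the interval product) is correct, and you were right to make explicit the standing assumptions $\mathbf{0}\in[\mathbf{x}]$, $\mathbf{f}(\mathbf{0})=\mathbf{0}$ of Section \ref{sec:centered-form}: with the centred form written as $[\mathbf{J}]([\mathbf{x}])\cdot[\mathbf{x}]$, the inclusion can fail if $\mathbf{0}\notin[\mathbf{x}]$, so the blanket ``for all $[\mathbf{x}]$'' in the statement must be read under those hypotheses. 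The second part is also structurally sound, and in fact yields the stronger quadratic bound $O\bigl(w([\mathbf{x}])^{2}\bigr)$ rather than just $o\bigl(w([\mathbf{x}])\bigr)$, matching Moore's excess-width theorem; the cancellation you worry about is unproblematic because for the point matrix $\mathbf{J}(\mathbf{0})$ the interval product $\mathbf{J}(\mathbf{0})\cdot[\mathbf{x}]$ is exactly the interval hull of the linear image, so both $[\mathbf{f}_{c}]([\mathbf{x}])$ and $[\mathbf{f}([\mathbf{x}])]$ share the same leading-order width, and $\|[\mathbf{x}]\|\leq w([\mathbf{x}])$ (since $\mathbf{0}\in[\mathbf{x}]$) makes the residual terms $O(\varepsilon^{2})$.

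One step deserves a more careful justification. You derive $[\mathbf{J}]([\mathbf{x}])\subset\mathbf{J}(\mathbf{0})+[\boldsymbol{\Delta}]$ with $\|[\boldsymbol{\Delta}]\|=O(\varepsilon)$ from the assumption that $\mathbf{J}$ is Lipschitz. But $[\mathbf{J}]$ is the \emph{natural extension}, which in general strictly overestimates the range $\mathbf{J}([\mathbf{x}])$, so Lipschitz continuity of the real function $\mathbf{J}$ does not by itself bound the width of $[\mathbf{J}]([\mathbf{x}])$. What you need is the standard lemma (also due to Moore) that a natural interval extension composed of Lipschitz elementary operations is a Lipschitz \emph{inclusion function} on a bounded box, i.e. $w\bigl([\mathbf{J}]([\mathbf{x}])\bigr)\leq\lambda\,w([\mathbf{x}])$; combined with $\mathbf{J}(\mathbf{0})\in[\mathbf{J}]([\mathbf{x}])$ this gives exactly the enclosure you use. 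With that lemma cited (or its hypothesis added), your proof is complete; it is a reasonable and more informative substitute for the paper's bare citation.
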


The statements of proposition \ref{prop:order:fc} are illustrated on figure \ref{fig:Centered-form}.

\begin{figure}[h]
	\centering
	\includegraphics[width=0.7\columnwidth]{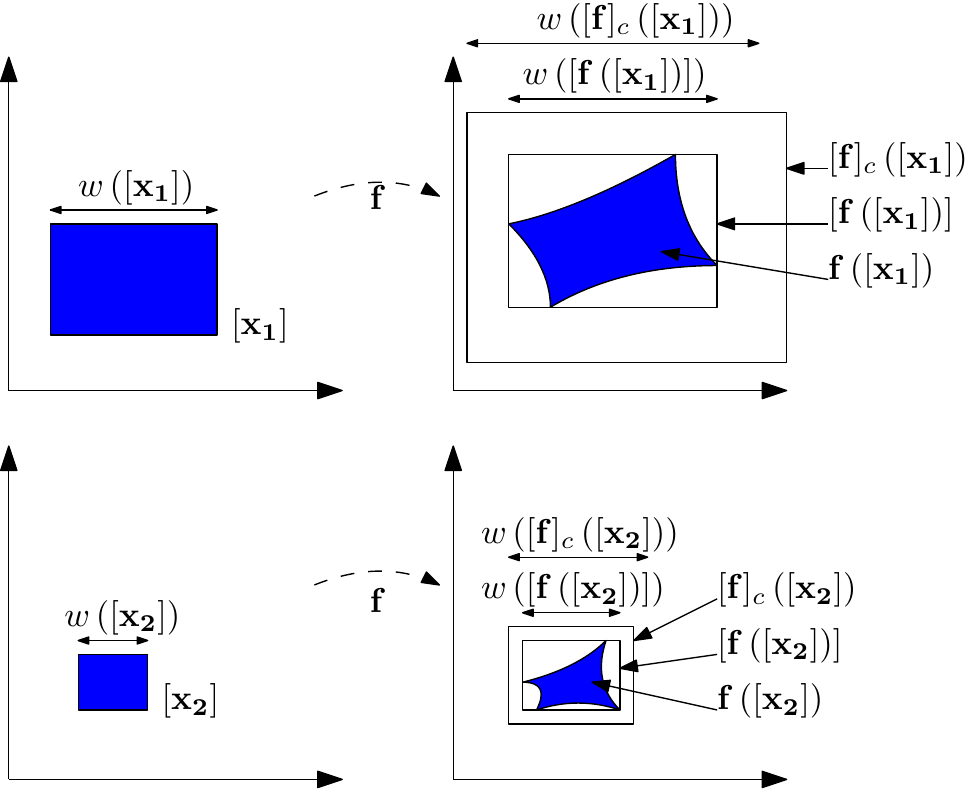}
	\caption{Illustration of the centred form extension of an interval function\label{fig:Centered-form}}
\end{figure}

\begin{thm}
	\label{thm:seq:centered0}Consider a function $\mathbf{f}$ with $\mathbf{f}(\mathbf{0})=\mathbf{0}$
	and with Jacobian matrix $\mathbf{J}(\mathbf{x})=\frac{d\mathbf{f}}{d\mathbf{x}}(\mathbf{x})$.
	Denote by $[\mathbf{f}]$ and $[\mathbf{J}]$ the natural inclusion
	functions for $\mathbf{f}$ and\textbf{ $\mathbf{J}$}. The centred
	form $[\mathbf{f}_{c}^{k}]$ associated to $\mathbf{f}^{k}=\mathbf{f}\circ\mathbf{f}\circ\dots\circ\mathbf{f}$
	is given by the following sequence
	\begin{equation}
	\begin{array}{ccc}
	[\mathbf{z}](0) & = & [\mathbf{x}]\\{}
	[\mathbf{A}](0) & = & \text{Id}\\{}
	[\mathbf{z}](k) & = & [\mathbf{f}]\left([\mathbf{z}](k-1)\right)\\{}
	[\mathbf{A}](k) & = & [\mathbf{J}]([\mathbf{z}](k-1))\cdot[\mathbf{A}](k-1)\\{}
	[\mathbf{f}_{c}^{k}]([\mathbf{x}]) & = & [\mathbf{A}](k)\cdot[\mathbf{x}]
	\end{array}\label{eq:th1}
	\end{equation}
\end{thm}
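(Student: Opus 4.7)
My plan is to prove the theorem by induction on $k$, the key identity being the chain rule for the Jacobian of a composed map.

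First I would handle the base case $k=1$. The recursion gives $[\mathbf{A}](1)=[\mathbf{J}]([\mathbf{x}])\cdot\mathrm{Id}=[\mathbf{J}]([\mathbf{x}])$, so $[\mathbf{A}](1)\cdot[\mathbf{x}]=[\mathbf{J}]([\mathbf{x}])\cdot[\mathbf{x}]$, which is exactly the centred form of $\mathbf{f}$ as given by \eqref{eq:centred-form}. Before attacking the inductive step, I would establish a preliminary monotonicity lemma: since $[\mathbf{f}]$ is an inclusion function and $[\mathbf{z}](k)=[\mathbf{f}]([\mathbf{z}](k-1))$ with $[\mathbf{z}](0)=[\mathbf{x}]$, a simple induction shows $\mathbf{f}^{k}(\mathbf{x})\in[\mathbf{z}](k)$ for every $\mathbf{x}\in[\mathbf{x}]$. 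In particular $[\mathbf{J}]([\mathbf{z}](k-1))$ encloses the set $\{\mathbf{J}(\mathbf{f}^{k-1}(\mathbf{x})):\mathbf{x}\in[\mathbf{x}]\}$.

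For the inductive step, assume the recursion correctly produces the centred form for $\mathbf{f}^{k-1}$, namely that $[\mathbf{A}](k-1)$ is a natural inclusion function for the Jacobian $\mathbf{J}_{\mathbf{f}^{k-1}}$ over $[\mathbf{x}]$. Writing $\mathbf{f}^{k}=\mathbf{f}\circ\mathbf{f}^{k-1}$ and applying the chain rule,
\begin{equation*}
\mathbf{J}_{\mathbf{f}^{k}}(\mathbf{x})=\mathbf{J}\bigl(\mathbf{f}^{k-1}(\mathbf{x})\bigr)\cdot\mathbf{J}_{\mathbf{f}^{k-1}}(\mathbf{x}).
\end{equation*}
Since $\mathbf{f}^{k-1}(\mathbf{x})\in[\mathbf{z}](k-1)$, the first factor is enclosed by $[\mathbf{J}]([\mathbf{z}](k-1))$, while by the inductive hypothesis the second factor is enclosed by $[\mathbf{A}](k-1)$. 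Their interval product $[\mathbf{J}]([\mathbf{z}](k-1))\cdot[\mathbf{A}](k-1)=[\mathbf{A}](k)$ is therefore a natural inclusion function for $\mathbf{J}_{\mathbf{f}^{k}}$ on $[\mathbf{x}]$. Using that $\mathbf{f}^{k}(\mathbf{0})=\mathbf{0}$ (immediate from $\mathbf{f}(\mathbf{0})=\mathbf{0}$), the centred form of $\mathbf{f}^{k}$ around $\mathbf{0}$ is then $[\mathbf{A}](k)\cdot[\mathbf{x}]$, which is precisely the formula in \eqref{eq:th1}.

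The main delicate point will be the chain-rule step, because the natural extension of a product of matrix-valued expressions can be pessimistic, and one has to be careful about what is meant by \emph{the} centred form of the composed function. I would resolve this by treating $[\mathbf{f}_{c}^{k}]$ as the centred form obtained from the chain-rule expression of $\mathbf{J}_{\mathbf{f}^{k}}$, so that the equality in the theorem is a tautological rearrangement of this symbolic expression rather than a statement about minimality. Everything else---the enclosure $\mathbf{f}^{k-1}(\mathbf{x})\in[\mathbf{z}](k-1)$ and the monotonicity of natural extensions---is a routine consequence of the properties of inclusion functions recalled in Section~\ref{sec:interval-analysis}.
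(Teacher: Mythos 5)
Your proof is correct and follows essentially the same route as the paper: the chain rule $\mathbf{J}_{\mathbf{f}^{k}}(\mathbf{x})=\mathbf{J}(\mathbf{f}^{k-1}(\mathbf{x}))\cdot\mathbf{J}_{\mathbf{f}^{k-1}}(\mathbf{x})$ combined with $\mathbf{f}^{k}(\mathbf{0})=\mathbf{0}$, lifted to intervals through the natural extensions. Your explicit induction, the enclosure $\mathbf{f}^{k-1}(\mathbf{x})\in[\mathbf{z}](k-1)$, and the remark about what ``the'' centred form of the composite means are just more careful spellings of steps the paper's proof performs at the punctual level and then substitutes into the definition.
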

\begin{proof}
	Consider a function $\mathbf{f}:\mathbb{R}^n\rightarrow\mathbb{R}^n$ such that $\mathbf{f}(\mathbf{0})=\mathbf{0}$, and denote $\mathbf{J}=\frac{d\mathbf{f}}{d\mathbf{x}}$ its Jacobian matrix.
	Write $\mathbf{f}^k$ the $k$-th iterated function of $\mathbf{f}$. Then for all $\mathbf{x}\in\mathbb{R}^n$, the Jacobian matrix of $\mathbf{f}^k$ is given by~:
	\begin{equation}
	\label{eq:decomposition}
	\dfrac{d\mathbf{f}^{k}}{d\mathbf{x}}(\mathbf{x})=\dfrac{d(\mathbf{f}\circ\mathbf{f}^{k-1})}{d\mathbf{x}}(\mathbf{x})=\dfrac{d\mathbf{f}}{d\mathbf{x}}(\mathbf{f}^{k-1}(\mathbf{x}))\cdot\dfrac{d\mathbf{f}^{k-1}}{d\mathbf{x}}(\mathbf{x}).
	\end{equation}
	Define $\mathbf{z}(k)=\mathbf{f}^{k}(\mathbf{x})$ and $\mathbf{A}(k)=\frac{d\mathbf{f}^{k}}{d\mathbf{x}}(\mathbf{x}).$
	Since for all $\mathbf{x}$,  $\mathbf{J}(\mathbf{x})=\frac{d\mathbf{f}}{d\mathbf{x}}(\mathbf{x})$,
	we get
	\begin{equation}
	\mathbf{A}(k)=\mathbf{J}(\mathbf{z}(k-1))\cdot\mathbf{A}(k-1).
	\end{equation}
	Now, since $\mathbf{f}(\mathbf{0})=\mathbf{0}$, it follows that $\mathbf{f}^k\left(\mathbf{0}\right)=\mathbf{0}$. Finally, substituting Equation (\ref{eq:decomposition}) into Equation (\ref{eq:centred-form}) yields 
	\begin{align*}
	[\mathbf{f}_{c}^{k}]([\mathbf{x}])&=\left[\dfrac{d\mathbf{f}^k}{d\mathbf{x}}\right]\left([\mathbf{x}]\right)\cdot[\mathbf{x}]\\
	&=[\mathbf{A}](k)\cdot[\mathbf{x}]
	\end{align*}
\end{proof}
\begin{rem}
	\label{rem:pessimism}
	From Proposition \ref{prop:order:fc}, for a given $k$, we have
	\begin{equation}
	\lim_{w([\mathbf{x}])\rightarrow0}\frac{w([\mathbf{f}_{c}^{k}]([\mathbf{x}]))-w([\mathbf{f}^{k}([\mathbf{x}])])}{w([\mathbf{x}])}=0
	\end{equation}
	which means that $[\mathbf{f}_{c}^{k}]$ tends towards the minimal inclusion function when $w([\mathbf{x}])\rightarrow0$.
	Now, for a given $[\mathbf{x}]$, even very small, we generally observe the following~:
	\begin{equation}
	\lim_{k\rightarrow\infty}\frac{w([\mathbf{f}_{c}^{k}]([\mathbf{x}]))-w([\mathbf{f}^{k}([\mathbf{x}])])}{w([\mathbf{x}])}=\infty.
	\end{equation}
	In other words, the pessimism introduced by the centred form increases
	with $k$.
\end{rem}
\begin{thm}
	\label{th:fc:stab}Consider a function $\mathbf{f}$ with $\mathbf{f}(\mathbf{0})=\mathbf{0}$.
	If there exist a box $[\mathbf{x}_{0}]\ni\mathbf{0}$ and a real $\alpha<1$ such that $[\mathbf{f}_{c}]([\mathbf{x}_{0}])\subset\alpha\cdot[\mathbf{x}_{0}]$
	then the interval operator $\boldsymbol{\Psi}_{\mathbf{f}}([\mathbf{x}])=[\mathbf{f}_{c}]([\mathbf{x}])$
	is a stability contractor inside $[\mathbf{x}_{0}]$.
\end{thm}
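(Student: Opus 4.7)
The plan is to verify in turn the four conditions (i)–(iv) of Definition \ref{def:stab-contract} for $\boldsymbol{\Psi}_{\mathbf{f}}([\mathbf{x}])=[\mathbf{f}_{c}]([\mathbf{x}])=[\mathbf{J}]([\mathbf{x}])\cdot[\mathbf{x}]$, taking $[\mathbf{a}],[\mathbf{b}]$ to range over sub-boxes of $[\mathbf{x}_{0}]$ containing $\mathbf{0}$ — these are the only boxes that can arise when iterating $\boldsymbol{\Psi}_{\mathbf{f}}$ from $[\mathbf{x}_{0}]$, thanks to (iii) together with monotonicity.

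Monotonicity (i) is immediate from two standard facts: $[\mathbf{J}]$, being a natural inclusion function, is monotone with respect to $\subset$, and interval matrix/vector multiplication preserves $\subset$ in each argument. Equilibrium (iii) reduces to the trivial computation $\boldsymbol{\Psi}_{\mathbf{f}}(\mathbf{0})=[\mathbf{J}](\mathbf{0})\cdot\mathbf{0}=\mathbf{0}$. Contractance (ii) I would obtain by chaining monotonicity with the hypothesis: $\boldsymbol{\Psi}_{\mathbf{f}}([\mathbf{a}])\subset\boldsymbol{\Psi}_{\mathbf{f}}([\mathbf{x}_{0}])\subset\alpha\cdot[\mathbf{x}_{0}]\subset[\mathbf{x}_{0}]$, where the last inclusion uses $\alpha<1$ together with $\mathbf{0}\in[\mathbf{x}_{0}]$.

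The crux of the proof is the convergence property (iv), which I would establish by induction on $k$. The base case $k=1$ is the hypothesis $\boldsymbol{\Psi}_{\mathbf{f}}([\mathbf{a}])\subset\alpha\cdot[\mathbf{a}]$. For the inductive step, assume $\boldsymbol{\Psi}_{\mathbf{f}}^{k}([\mathbf{a}])\subset\alpha^{k}\cdot[\mathbf{a}]$. Applying $\boldsymbol{\Psi}_{\mathbf{f}}$ and invoking monotonicity gives
\begin{equation*}
\boldsymbol{\Psi}_{\mathbf{f}}^{k+1}([\mathbf{a}])\subset\boldsymbol{\Psi}_{\mathbf{f}}(\alpha^{k}\cdot[\mathbf{a}])=[\mathbf{J}](\alpha^{k}\cdot[\mathbf{a}])\cdot(\alpha^{k}\cdot[\mathbf{a}]).
\end{equation*}
Since $\alpha^{k}\leq 1$ and $\mathbf{0}\in[\mathbf{a}]$, we have $\alpha^{k}\cdot[\mathbf{a}]\subset[\mathbf{a}]$, so monotonicity of $[\mathbf{J}]$ gives $[\mathbf{J}](\alpha^{k}\cdot[\mathbf{a}])\subset[\mathbf{J}]([\mathbf{a}])$. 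The nonnegative scalar $\alpha^{k}$ then factors cleanly out of the interval matrix–vector product:
\begin{equation*}
[\mathbf{J}](\alpha^{k}\cdot[\mathbf{a}])\cdot(\alpha^{k}\cdot[\mathbf{a}])\subset\alpha^{k}\cdot\bigl([\mathbf{J}]([\mathbf{a}])\cdot[\mathbf{a}]\bigr)=\alpha^{k}\cdot\boldsymbol{\Psi}_{\mathbf{f}}([\mathbf{a}])\subset\alpha^{k}\cdot\alpha\cdot[\mathbf{a}]=\alpha^{k+1}\cdot[\mathbf{a}],
\end{equation*}
which closes the induction.

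The step I expect to be the main obstacle is this factorization of $\alpha^{k}$ out of the product, since it quietly combines two separate facts — monotonicity of $[\mathbf{J}]$ on the shrunk sub-box $\alpha^{k}\cdot[\mathbf{a}]\subset[\mathbf{a}]$ and commutativity of nonnegative scalar multiplication with interval matrix–vector products — neither of which is visible from the statement alone. A secondary caveat is that condition (ii) literally reads $\boldsymbol{\Psi}([\mathbf{a}])\subset[\mathbf{a}]$: strict contractance into each sub-box can fail when $[\mathbf{a}]$ is severely skewed relative to $[\mathbf{x}_{0}]$, but this is immaterial for the intended use via Proposition \ref{prop:stab-contract}, which only needs the iterates $\boldsymbol{\Psi}_{\mathbf{f}}^{k}([\mathbf{x}_{0}])$, and these sit inside $\alpha^{k}\cdot[\mathbf{x}_{0}]$ by exactly the induction above.
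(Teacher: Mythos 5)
Your treatment of the key convergence property (iv) is essentially the paper's own argument: the same induction, unwrapping $[\mathbf{f}_c]$ as $[\mathbf{J}](\cdot)\cdot(\cdot)$, enlarging the Jacobian enclosure via inclusion monotonicity (using $\alpha^k\cdot[\mathbf{a}]\subset[\mathbf{a}]$, which needs $\mathbf{0}\in[\mathbf{a}]$, just as the paper assumes $\mathbf{0}\in[\mathbf{x}]$), factoring the nonnegative scalar $\alpha^k$ out of the product, and invoking the hypothesis $[\mathbf{f}_c]([\mathbf{a}])\subset\alpha\cdot[\mathbf{a}]$ at the end --- the paper merely phrases it with the iterates $[\mathbf{w}_k]$ and the fixed box $[\mathbf{x}]$ rather than with $\alpha^k\cdot[\mathbf{a}]$. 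Your caveat about contractance (ii) is well taken: the paper only asserts that (i)--(iii) are ``easily checked,'' whereas $[\mathbf{f}_c]([\mathbf{a}])\subset[\mathbf{a}]$ can indeed fail for sub-boxes whose aspect ratio differs from $[\mathbf{x}_0]$'s, but, as you observe, this does not affect the intended use through Proposition \ref{prop:stab-contract}.
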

\begin{proof}
	Properties (\ref{eq:def1}, i), (\ref{eq:def1}, ii) and (\ref{eq:def1}, iii) of definition \ref{def:stab-contract} are easily
	checked. Now, let us prove property (\ref{eq:def1}, iv) by induction.
	
	Let $\alpha<1$ and $[\mathbf{x}]\ni\mathbf{0}$.	Property (\ref{eq:def1}, iv) states that \[[\mathbf{f}_c]\left([\mathbf{x}]\right)\subset\alpha[\mathbf{x}]\implies\forall k \geq 0,\, [\mathbf{f}_c]^k\left([\mathbf{x}]\right)\subset\alpha^k\cdot[\mathbf{x}]\]
	where $[\mathbf{f}_c]^k=\underset{k}{\underbrace{[\mathbf{f}_{c}]\circ\dots\circ[\mathbf{f}_{c}]}}$.
	Since $[\mathbf{x}]\subset\alpha^0[\mathbf{x}]$, the property holds for $k=0$.
	
	Let us define the sequence \[[\mathbf{w}_{k+1}]=[\mathbf{f}_c]\left([\mathbf{w}_k]\right)\]
	Now, assume that the property also holds for a $k>0$, \textit{i.e.}\[[\mathbf{w}_k]\subset\alpha^k\cdot[\mathbf{x}]\]
	Let us show that \[[\mathbf{w}_{k+1}]\subset\alpha^{k+1}\cdot[\mathbf{x}]\]
	\[
	\begin{array}{ccll}
	[\mathbf{w}_{k+1}] & = & [\mathbf{f}_{c}]([\mathbf{w}_k])\\
	& = & [\mathbf{J}]\left([\mathbf{w}_k]\right)\cdot[\mathbf{w}_k]\\
	& \subset & [\mathbf{J}]\left([\mathbf{x}]\right)\cdot[\mathbf{w}_k] & \text{since \ensuremath{\left[\mathbf{J}\right]} is inclusion monotonic and \ensuremath{[\mathbf{w}_k]\subset\alpha^k\cdot[\mathbf{x}]\subset[\mathbf{x}]}}\\
	& \subset & [\mathbf{J}]\left([\mathbf{x}]\right)\cdot(\alpha^k\cdot[\mathbf{x}]) & \text{since \ensuremath{[\mathbf{w}_k]\subset\alpha^k[\mathbf{x}]}}\\
	& = & \alpha^k\cdot\left[\mathbf{J}\right]([\mathbf{x}])\cdot[\mathbf{x}]\\
	& = & \alpha^k\cdot[\mathbf{f}_c]\left([\mathbf{x}]\right)\\
	& \subset & \alpha^k\cdot(\alpha\cdot[\mathbf{x}])\\
	& \subset & \alpha^{k+1}\cdot[\mathbf{x}] & \text{since \ensuremath{[\mathbf{f}_c]\left([\mathbf{x}]\right)\subset\alpha[\mathbf{x}]}}
	\end{array}
	\]
\end{proof}

Theorem \ref{th:fc:stab} shows that the centred form can be used in a
very general way to build stability contractors. Other integration
methods such as the one proposed by Lohner \cite{Lohner87} or the ones based on affine
forms \cite{Andrade94} do not have this property, even if they usually yield a tighter enclosure of the image set of a function.

\section{Proving stability using the centred form\label{sec:stability}}

In this section, we show how the centred form can be used to prove stability of non-linear dynamical systems.

\subsection{Method}

Consider the system described by Equation (\ref{eq:seq1}), where $\mathbf{f}\left(\mathbf{0}\right)=\mathbf{0}$.

In short, the methods consists in finding an initial box $[\mathbf{x}]$ in the state space of the system such that the centred form of $\mathbf{f}$ is a stability contractor. This implies that iterating this centred form onto that initial box will converge towards $\mathbf{0}$. Since the centred form is an inclusion function for $\mathbf{f}$, that implies that the system will also converge towards $\mathbf{0}$.

More specifically, if for a given box $[\mathbf{x}]\ni\mathbf{0}$ with width $\delta$ there exist $q\geq1$ and $\alpha<1$ such that $[\mathbf{f}_c^q]\left([\mathbf{x}]\right)\subset\alpha\cdot[\mathbf{x}]$, then $[\mathbf{f}_c^q]$ is a stability contractor of rate $\alpha$ (according to theorem \ref{th:fc:stab}). Now, since proposition \ref{prop:stab-contract} states \[\lim_{k\rightarrow\infty}[\mathbf{f}_c^q]^{k}([\mathbf{x}])=\mathbf{0}\]
and since proposition \ref{prop:order:fc} asserts that \[\forall\mathbf{x}\in[\mathbf{x}],\:\mathbf{f}^q\left(\mathbf{x}\right)\in\mathbf{f}^q\left([\mathbf{x}]\right)\subset[\mathbf{f}^q_c]\left([\mathbf{x}]\right)\]
it follows that\[\forall\mathbf{x}\in[\mathbf{x}],\:\lim_{k\rightarrow\infty}\left(\mathbf{f}^q\right)^{k}(\mathbf{x})=\mathbf{0}\]
which implies asymptotic stability of the system (see Equation \ref{eq:asymptotic-stab}).

Furthermore, let us define the sequence
\begin{equation}
\label{eq:sequence}
[\mathbf{x}_{k+1}]=[\mathbf{f}^q_c]^k\left([\mathbf{x}]\right)
\end{equation}
Thus, substituting Equation (\ref{eq:sequence}) in property (\ref{eq:def1}, iv) and applying Equation (\ref{eq:implication}) yields
\begin{align*}
\left\lVert[\mathbf{x}_{k+1}]\right\rVert&\leq\alpha^k\left\lVert[\mathbf{x}]\right\rVert\\
&\leq\left\lVert[\mathbf{x}]\right\rVert e^{\ln\left(\alpha\right)k}\\
&\leq\left\lVert[\mathbf{x}]\right\rVert e^{-\beta k}
\end{align*}
where $\beta = -\ln\left(\alpha\right) > 0$. This implies exponential stability of the system (see Equation \ref{eq:exponential-stab}).

Our method is summarized by algorithm \ref{alg:method}. It takes as inputs the function $\mathbf{f}$ describing the system, the initial box $[\mathbf{x}]$, and a maximum number of iterations $N$. The latter is necessary, as stated in remark \ref{rem:pessimism} to avoid looping indefinitely. Usually, $N$ does not need to be larger than $10$, since the iterated centred form tends to diverge with the number of iterations.
\begin{algorithm}[h]
	\begin{algorithmic}
		\REQUIRE $\mathbf{f}$, $[\mathbf{x}]$, $N$
		\ENSURE \textbf{true} if the system is stable, \textbf{false} if undetermined
		\STATE $[\mathbf{A}]\leftarrow\mathbf{I}$
		\STATE $[\mathbf{z}]\leftarrow[\mathbf{x}]$
		\FOR{$i=1$ \TO $N$}
			\STATE $[\mathbf{A}]\leftarrow[\mathbf{J}]\left([\mathbf{z}]\right)\cdot[\mathbf{A}]$
			\STATE $[\mathbf{z}]\leftarrow[\mathbf{f}]\left([\mathbf{z}]\right)$
			\STATE $[\mathbf{x}_i]\leftarrow[\mathbf{A}]\cdot[\mathbf{x}]$
			\IF {$[\mathbf{x}_i]\subset[\mathbf{x}]$}
				\RETURN \textbf{true}
			\ENDIF
		\ENDFOR
		\RETURN \textbf{false}
	\end{algorithmic}
	\caption{Centred form based stability contractor}
	\label{alg:method}
\end{algorithm}

It is important to remember that if the previous algorithm yields a \textbf{false} value, that does not necessarily mean that the system is unstable. Indeed, bisecting the initial box $[\mathbf{x}]$ into smaller boxes and running the algorithm on them could prove stability of the system in the neighbourhood defined by those boxes.

\subsection{Completeness of the method}
\label{sec:completeness}
Now, let us show that whenever a system actually is exponentially stable, our method will be able to prove it. Given $\eta>0$, we denote by $\mathcal{B}_{\eta}$ the set of all hypercubes $[\mathbf{x}]$ centred at $\mathbf{0}$ and such that $w([\mathbf{x}])\leq\eta$.
\begin{prop}
	\label{prop:exp_stab_psi}
	If the system is exponentially stable around $\mathbf{0}$, then
	\[
	\exists\eta>0,\:\forall[\mathbf{x}]\in\mathcal{B}_{\eta},\:\exists k>0,\:\exists\alpha<1,\:[\mathbf{f}_{c}^{k}]([\mathbf{x}])\subset\alpha\cdot[\mathbf{x}]
	\]
\end{prop}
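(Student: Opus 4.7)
The plan is to reduce the proposition to a quantitative statement about the linearization $\mathbf{J}(\mathbf{0})$ at the origin, and then transport that quantitative bound to the interval operator by a continuity argument on small enough hypercubes. The crucial observation is that although the centred form is pessimistic, for a \emph{fixed} number of iterations $k_0$ the interval matrix $[\mathbf{A}](k_0)$ produced by Theorem \ref{thm:seq:centered0} can be made arbitrarily close to the real matrix $\mathbf{J}(\mathbf{0})^{k_0}$ by shrinking the initial box.

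First I would deduce from exponential stability of the nonlinear map that $\rho(\mathbf{J}(\mathbf{0})) < 1$. This is the standard converse-linearization fact: if some eigenvalue of $\mathbf{J}(\mathbf{0})$ had modulus $\geq 1$, then for a small initial condition taken close to the corresponding (approximate) eigendirection one would have $\mathbf{f}^k(\mathbf{x}_0) = \mathbf{J}(\mathbf{0})^k \mathbf{x}_0 + o(\|\mathbf{x}_0\|)$, whose dominant component grows or stagnates and therefore violates the uniform exponential decay $\|\mathbf{x}_k\| \leq \alpha \|\mathbf{x}_0\| e^{-\beta k}$ with $\beta > 0$. Once $\rho(\mathbf{J}(\mathbf{0})) < 1$ is established, Gelfand's formula applied to the max-operator norm (which is the one compatible with the paper's hypercube norm $\|\cdot\|$) gives an integer $k_0 \geq 1$ and a real $\alpha_0 < 1$ with $\|\mathbf{J}(\mathbf{0})^{k_0}\|_\infty \leq \alpha_0$. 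I then fix some $\alpha \in (\alpha_0,1)$ as the target contraction rate.

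Next I would use continuity. Since $\mathbf{f}$ is $C^1$ and $\mathbf{f}(\mathbf{0}) = \mathbf{0}$, the finite recursion $[\mathbf{z}](j) = [\mathbf{f}]([\mathbf{z}](j-1))$ for $j = 1, \ldots, k_0 - 1$ starting from any $[\mathbf{z}](0) = [\mathbf{x}] \in \mathcal{B}_\eta$ produces boxes whose norms can be made uniformly as small as desired by shrinking $\eta$; on such boxes, $[\mathbf{J}]([\mathbf{z}](j))$ is componentwise as close to $\mathbf{J}(\mathbf{0})$ as needed, so the finite product $[\mathbf{A}](k_0) = [\mathbf{J}]([\mathbf{z}](k_0-1)) \cdots [\mathbf{J}]([\mathbf{z}](0))$ is close to $\mathbf{J}(\mathbf{0})^{k_0}$ in max-operator norm. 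Picking $\eta$ small enough gives $\|[\mathbf{A}](k_0)\|_\infty \leq \alpha$. Because $[\mathbf{x}]$ is a hypercube centred at $\mathbf{0}$, each coordinate of $[\mathbf{A}](k_0) \cdot [\mathbf{x}]$ is a symmetric interval of half-width at most $\|[\mathbf{A}](k_0)\|_\infty$ times the half-width of $[\mathbf{x}]$, which yields $[\mathbf{f}_c^{k_0}]([\mathbf{x}]) = [\mathbf{A}](k_0) \cdot [\mathbf{x}] \subset \alpha \cdot [\mathbf{x}]$, the desired conclusion.

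The main obstacle I expect is the first step, namely extracting $\rho(\mathbf{J}(\mathbf{0})) < 1$ cleanly from the nonlinear exponential bound and excluding the borderline spectral radius $= 1$ case, where the linear part alone cannot distinguish stable from unstable behaviour. The rest is routine uniform continuity of finitely many Jacobian products, and the coordinatewise bookkeeping relating $\|[\mathbf{A}](k_0)\|_\infty$ to inclusion in $\alpha \cdot [\mathbf{x}]$.
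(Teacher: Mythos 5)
Your proof is correct, but it takes a genuinely different route from the paper's. The paper never mentions the linearisation: it reads exponential stability as a $k$-step contraction of the nonlinear map, $\|\mathbf{f}^{k}(\mathbf{x})\|_{\infty}<\beta\|\mathbf{x}\|_{\infty}$ with $\beta<1$ on a neighbourhood, translates this into an inclusion for the \emph{minimal} hull, $[\mathbf{f}^{k}([\mathbf{x}])]\subset\beta\cdot[\mathbf{x}]$ for centred cubes, and then invokes the vanishing-pessimism property of the centred form (Proposition \ref{prop:order:fc}) to absorb the $o\left(w([\mathbf{x}])\right)$ overestimation into some $\alpha$ with $\beta\leq\alpha<1$ once the cube is small enough. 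You instead pass through the linearisation: exponential stability forces $\rho\left(\mathbf{J}(\mathbf{0})\right)<1$, Gelfand's formula gives a fixed $k_{0}$ with $\|\mathbf{J}(\mathbf{0})^{k_{0}}\|_{\infty}<1$, and continuity of the finitely many interval Jacobian factors in the recursion of Theorem \ref{thm:seq:centered0} makes $[\mathbf{A}](k_{0})$ as close to $\mathbf{J}(\mathbf{0})^{k_{0}}$ as desired, after which the symmetric-box bookkeeping $[\mathbf{A}](k_{0})\cdot[\mathbf{x}]\subset\alpha\cdot[\mathbf{x}]$ is exact for hypercubes centred at $\mathbf{0}$. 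Your version makes explicit that the proposition really rests on exponential stability of the linearisation and needs only the explicit product structure of the centred form plus continuity of $[\mathbf{J}]$, not the $o\left(w([\mathbf{x}])\right)$ error estimate; the paper's version avoids all spectral considerations and inherits $k$ directly from the nonlinear decay. On the one obstacle you flag: it is not a real one, provided (as the paper's own proof implicitly does) one reads the definition in Equation (\ref{eq:exponential-stab}) with $\beta>0$. Fix $k$ with $\alpha e^{-\beta k}<1$, write $\mathbf{f}^{k}(\mathbf{x})=\mathbf{J}(\mathbf{0})^{k}\mathbf{x}+o\left(\|\mathbf{x}\|\right)$ for that fixed $k$, divide by $\|\mathbf{x}\|$ and let $\|\mathbf{x}\|\rightarrow0$ over all directions; this gives $\|\mathbf{J}(\mathbf{0})^{k}\|_{\infty}\leq\alpha e^{-\beta k}<1$ directly, so $\rho\left(\mathbf{J}(\mathbf{0})\right)<1$ without any approximate-eigendirection contradiction, and the borderline case $\rho=1$ is automatically excluded. (If one allowed $\beta=0$, the proposition itself would fail, e.g.\ for a pure rotation, so the strict-decay reading is the intended one.)
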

\begin{proof}
	Let us assume exponential stability of the system described by $\mathbf{f}$ around $\mathbf{0}$, \textit{i.e.} there exists a neighbourhood $\mathcal{N}$ of $\mathbf{0}$, such that
	\[
	\exists\beta,\:0<\beta<1,\:\exists k>0,\:\forall\mathbf{x}\in\mathcal{N},\:\|\mathbf{f}^{k}(\mathbf{x})\|_{\infty}<\beta\|\mathbf{x}\|_{\infty}
	\]
	This property translates into
	\[
	[\mathbf{f}^{k}([\mathbf{x}])]\subset\beta\cdot[\mathbf{x}]
	\]
	for all cubes $[\mathbf{x}]$ in $\mathcal{N}$ centred in $\mathbf{0}$,
	where $[\mathbf{f}^{k}([\mathbf{x}])]$ is the smallest box which
	contains the set $\mathbf{f}^{k}([\mathbf{x}])$. Take one of these
	cubes $[\mathbf{x}]$ and denote by $\eta$ its width $w\left([\mathbf{x}]\right)$. If $\eta$ is sufficiently small, the pessimism of the centred form becomes arbitrarily
	small and we have
	\[
	\exists\alpha,\:\beta\leq\alpha<1,\:[\mathbf{f}_{c}^{k}]([\mathbf{x}])\subset\alpha[\mathbf{x}]
	\]
\end{proof}

\section{Application}
\subsection{Example 1~: Proving stability of the logistic map}
The aim of this example is to illustrate the section \ref{sec:completeness}.
The logistic map is a simple recurrence relation which behaviour can
be highly complex or chaotic~:
\begin{equation}
x_{k+1}=\rho\cdot x_{k}\cdot\left(1-x_{k}\right)\label{eq:logistic}
\end{equation}
The parameter $\rho$ influences the dynamics of the system. For $\rho=2.4$,
the system is stable, but shows an oscillating behaviour around its
equilibrium point (see figure \ref{fig:ex1:logistic-no-interval}).
\begin{figure}[h]
	\centering
	\includegraphics[width=0.7\columnwidth]{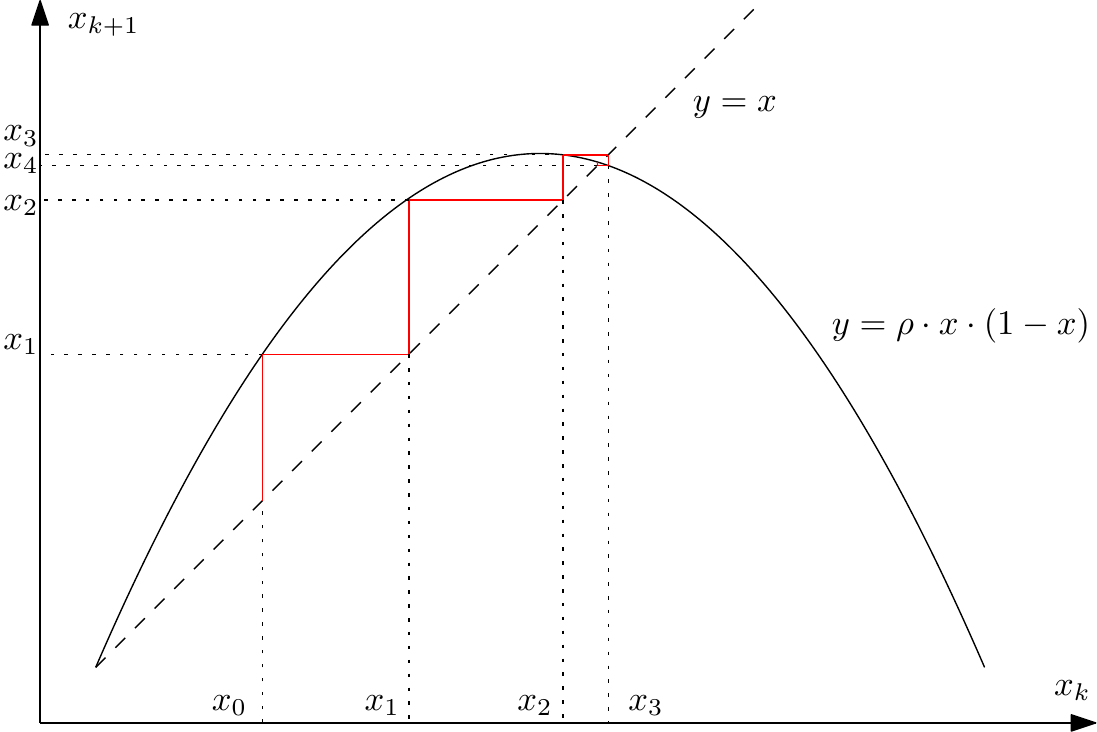}
	\caption{Behaviour of the logistic map for $\rho=2.4$}
	\label{fig:ex1:logistic-no-interval}
\end{figure}

Note that the equilibrium point is not $\mathbf{0}$
and we thus need to centre our problem to apply the stability method
as described in this paper (see remark \ref{rem:centering-probem}).

Figure \ref{fig:ex1:logistic} displays the behaviour of our algorithm for an initial box
$[x_{0}]=\left[0.577,\:0.585\right]$. Even if $[x_{1}]$ is not contained
in $[x_{0}]$, we have $[x_{2}]\subset[x_{0}]$. We have thus shown
that for an initial state chosen in $[x_{0}]$, the trajectory will
converge towards the stable point.
\begin{figure}[h]
	\centering
	\includegraphics[width=0.9\columnwidth]{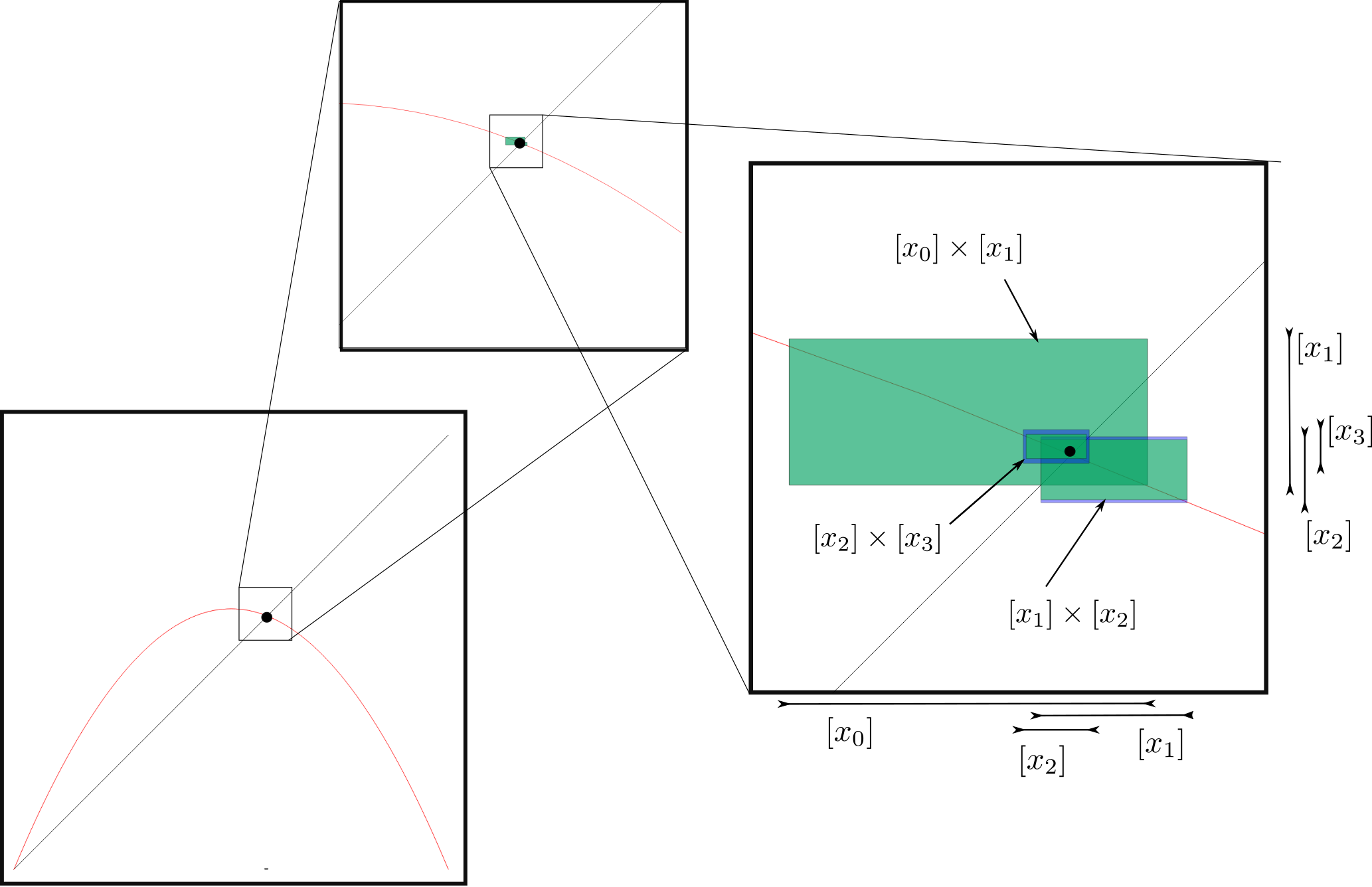}
	\caption{Stability contractor proving stability of the logistic map. The initial box is small, whence the 2 enlargements symbolised by the connected black squares.}
	\label{fig:ex1:logistic}
\end{figure}

\subsection{Example 2~: Proving stability of a three dimensional map}
The aim of this example is to illustrate the steps of algorithm \ref{alg:method} in a higher dimensional problem.

Let us consider the following discrete system:
\begin{equation}
\begin{array}{ccc}
\mathbf{x}_{k+1} & = & 0.8\cdot\mathbf{R}\left(\frac{\pi}{6}+x_{1},\:\frac{\pi}{4}+x_{2},\:\frac{\pi}{3}+x_{3}\right)\cdot\mathbf{x}_{k}\\
& = & \mathbf{f}\left(\mathbf{x}_{k}\right)
\end{array}\label{eq:3dsystem}
\end{equation}
where $\mathbf{R}\left(\varphi,\theta,\psi\right)$ is the rotation
matrix parametrized by roll ($\varphi$, around axis $X$), pitch
($\theta$, around $Y$) and yaw ($\psi$, around $Z$) angles.

With our approach, we show that the discrete system is stable, as
depicted by Figure \ref{fig:ex2:cube:euler} which is a projection
of the 3-dimensional system across steps of our algorithm. With $[\mathbf{x}_{0}]=[-\varepsilon,\varepsilon]^{\times3}$,
where $\varepsilon=0.004$, the algorithm needs 3 iterations to get
captured inside the initial box. We thus have proved the stability
of the system. The blue sets correspond to the image
of $[\mathbf{x}_{0}]$ by $\mathbf{f},\:\mathbf{f}^{2},\:\mathbf{f}^{3}$
and has been obtained using a Monte-Carlo method for visualization
purposes. The point in the centre corresponds to the origin $\mathbf{0}$.

\begin{figure}[h]
	\centering
	\includegraphics[width=1\columnwidth]{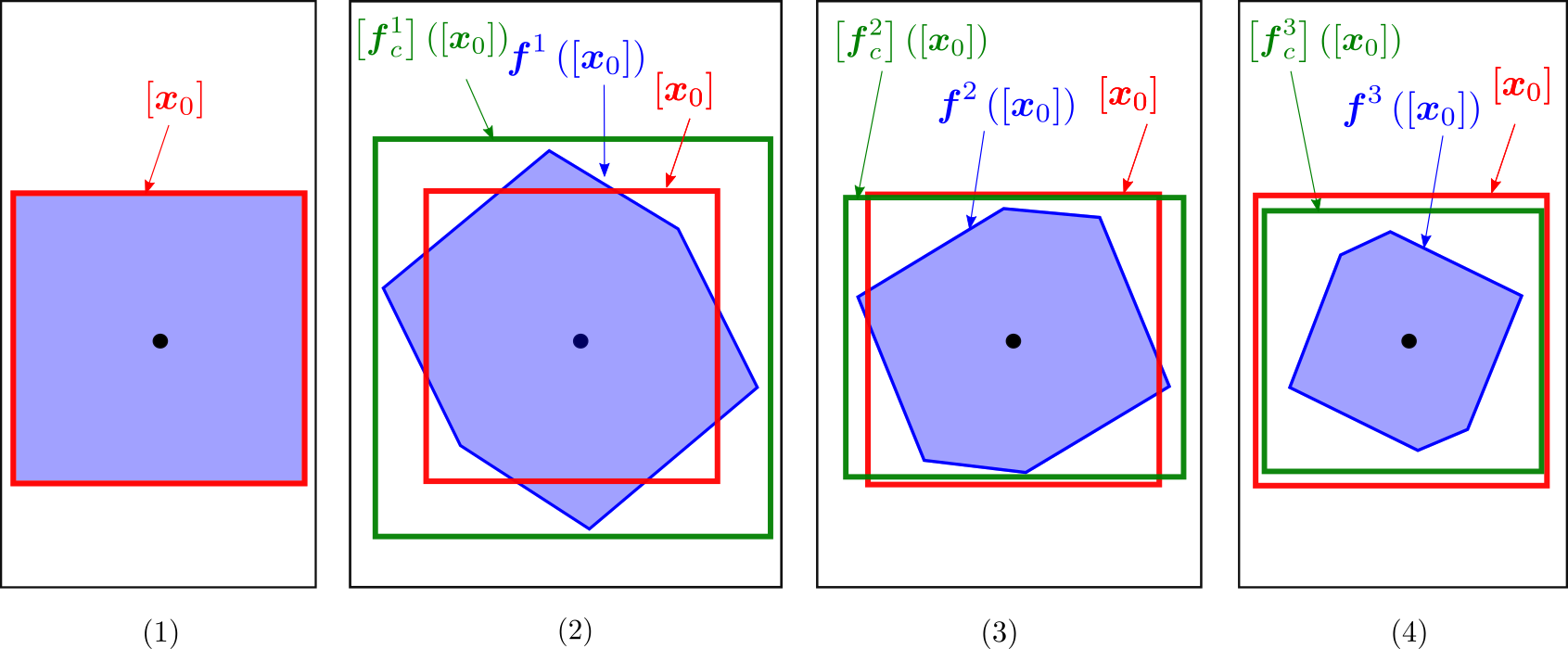}
	\caption{View in the $(x_{1},x_{2})$-space of the effect of the stability contractor
		acting on the $(x_{1},x_{2},x_{3})$-space\label{fig:ex2:cube:euler}}
\end{figure}

Considering the shape of the set $\mathbf{f}^{k}([\mathbf{x}_{0}])$,
we understand that zonotope-based methods \cite{Combastel05,JianWan07}
or Lohner integration methods \cite{Lohner87,wilczak2011} could get a stronger convergence.
However, these efficient operators cannot be used to prove the stability
except if we are able to prove that they correspond to a stability
contractor, which is not the case yet.

\subsection{Example 3~: Validation of a localisation system}
The aim of this example is to illustrate how our method could be applied to prove stability of an existing localisation system, before integrating the latter in a robot for example.
\begin{rem}
	Proving the stability of such a localisation system is of importance in robotics. Indeed, the commands are usually computed using the robot's state, estimated by the localisation system. If the latter is not stable, \textit{i.e.} it does not converge towards the actual position of the robot, then control is useless. Note that additionally to stability, accuracy and precision are also wanted features for a localisation system, that we won't address here.
\end{rem}
We also briefly explain how our method can be coupled with a paving algorithm to characterize the \emph{stability region} of the localisation system. The latter corresponds to the acceptable set of parameters of the system, in the sense that it remains stable regardless of the parameters picked in that set.

Consider a range-only localisation system using two landmarks $\mathbf{a},\:\mathbf{b}$
as represented on Figure \ref{fig:localisation-using-a}. A static
robot is located at position $\mathbf{m}$, and is able to measure
the exact distance between itself and the landmarks.
\begin{figure}[h]
	\begin{centering}
		\includegraphics{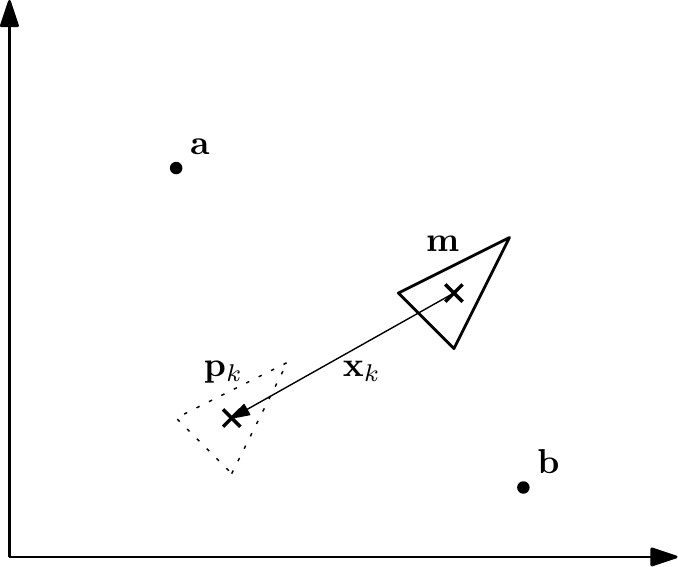}
		\par\end{centering}
	\caption{localisation problem formalisation\label{fig:localisation-using-a}}
\end{figure}

More precisely, we assume that we have the following measurements
\[
\left(\begin{array}{c}
y_{1}\\
y_{2}
\end{array}\right)=\mathbf{h}(\mathbf{m})=\left(\begin{array}{c}
(m_{1}-a_{1})^{2}+(m_{2}-a_{2})^{2}\\
(m_{1}-b_{1})^{2}+(m_{2}-b_{2})^{2}
\end{array}\right).
\]
Assume also that, at iteration $k$, the robot believes to be at position
$\mathbf{p}_{k}$ whereas it actually is at position $\mathbf{m}$.
We define the Newton sequence as
\[
\mathbf{p}_{k+1}=\mathbf{p}_{k}+\mathbf{J}_{\mathbf{h}}^{-1}(\mathbf{p}_{k})\cdot(\mathbf{h}(\mathbf{m})-\mathbf{h}(\mathbf{p}_{k}))
\]
where
\[
\mathbf{J}_{\mathbf{h}}(\mathbf{p})=2\left(\begin{array}{ccc}
p_{1}-a_{1} & \, & p_{2}-a_{2}\\
p_{1}-b_{1} & \, & p_{2}-b_{2}
\end{array}\right)
\]
We have
\[
\mathbf{x}_{k+1}=\underset{\mathbf{f}(\mathbf{x}_{k},\,\mathbf{m})}{\underbrace{\mathbf{x}_{k}+\mathbf{J}_{\mathbf{h}}^{-1}(\mathbf{x}_{k}+\mathbf{m})\cdot(\mathbf{h}(\mathbf{m})-\mathbf{h}(\mathbf{x}_{k}+\mathbf{m}))}},
\]
where $\mathbf{x}_{k}=\mathbf{p}_{k}-\mathbf{m}$ is the localisation
error. We want to show that for $\mathbf{m}$ in a given region of
the plane, the sequence defined by $\mathbf{x}_{k+1}=\mathbf{f}(\mathbf{x}_{k},\mathbf{m})$
is stable,\emph{ i.e.}, converges to $\mathbf{0}$.

\begin{rem}
Of course, this localisation system could be greatly improved. But let us imagine this localisation system has been implemented on a robot and working for years. Its performances are known and trusted, and changing it is not possible or desired. The goal here is to validate the existing system, not to build a new one. The guarantee of stability is a property we can check with our approach.
\end{rem}

Now, let us introduce the concept of \emph{stability region}.
\begin{defn}
	The sequence $\mathbf{x}_{k+1}=\mathbf{f}(\mathbf{x}_{k},\,\mathbf{m})$
	parametrized by $\mathbf{m}\in[\mathbf{m}]$ is \emph{robustly stable} if
	\begin{equation}
	\forall\,\mathbf{m}\in[\mathbf{m}],\:\lim_{k\rightarrow\infty}\mathbf{x}_{k}=\mathbf{0}
	\end{equation}
\end{defn}

\begin{defn}
	We define the\textit{ stability region} as
	\begin{equation}
	\mathbb{M}=\left\{ \mathbf{m}\,|\,\exists\varepsilon>0,\,\forall\,\mathbf{x}_{0}\in[-\varepsilon,\varepsilon]^{\times n},\lim_{k\rightarrow\infty}\|\mathbf{x}_{k}\|=0\right\} 
	\end{equation}
\end{defn}
To calculate an inner approximation of $\mathbb{M}$, we decompose
the $\mathbf{m}$-space into small boxes. Take one of these small boxes
$[\mathbf{m}]$ and a box $[\mathbf{x}_{0}]$ around $\mathbf{0}$.
The box $[\mathbf{x}_{0}]$ should be small, but large compared to
$[\mathbf{m}]$. If there exists $k>0$ such that the system is stable, \textit{i.e.} $[\mathbf{f}_{c}^{k}]([\mathbf{x}_{0}],\,[\mathbf{m}])\subset[\mathbf{x}_{0}]$,
then $[\mathbf{m}]\subset\mathbb{M}$.

Consider a situation where the landmarks are at positions $\mathbf{a}=\left(0,\:0.1\right)^{\mathrm{T}},\,\mathbf{b}=\left(0,\:-0.1\right)^{\mathrm{T}}$.
To characterize the set $\mathbb{M}$, we build a paving of the plane
with small boxes $[\mathbf{m}]$ of width $0.001$. Taking $\varepsilon=\sqrt{w([\mathrm{\mathbf{m}}])}$
and $[\mathbf{x}_{0}]=[-\varepsilon,\varepsilon]^{\times 2}$, we get the
results depicted on Figure \ref{fig:Result-of-the}. The green area
is proved to be inside $\mathbb{M}$. As a consequence, if the robot
is located in the green area and if its initial localisation error
is smaller than $\varepsilon$, then the classical Newton method will
converge towards the actual position of the robot. On the other hand,
we are not able to conclude anything about the stability of the localisation
algorithm in the red area: this could require to take smaller boxes
$[\mathbf{m}]$ in the $\mathbf{m}$-space, a smaller initial condition $[\mathbf{x}_{0}]$, or this might also mean that the localisation system is not stable when initialised inside that specific choice of parameters.
\begin{figure}[h]
	\centering
	\includegraphics[width=0.5\columnwidth]{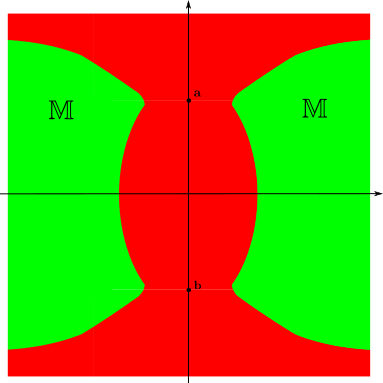}
	\caption{Stability region $\mathbb{M}$ of the localisation system}
	\label{fig:Result-of-the}
\end{figure}

\subsection{Example 4~: Stability of a cyclic trajectory}
This last example illustrates how our method can be applied to a real-life problem~: a robot controlled by a state-machine is deployed in a lake; we will prove that the trajectory of the latter is stable.

\subsubsection{Presentation of the problem}
Let us consider a robot (\textit{e.g.} an Autonomous Underwater Vehicle
(AUV)) cruising in a closed area (\textit{e.g.} a lake) at a constant speed. We consider that the robot is moving in a plane, for the sake of simplicity. The robot is controlled by an automaton and a heading
controller. It also embeds a sensor (\textit{e.g.} a sonar)
to detect if the shore is close by. In this case, an event is triggered
and the automaton changes its states~: a new heading command is
sent to the controller. Consider for instance the following sequence for
the automaton:
\begin{enumerate}
	\item Head East during $\text{25}$ sec.
	\item Head North until reaching the shore
	\item Go South for $7.5$ sec
	\item Head West until reaching the shore
	\item Go to 1.
\end{enumerate}

\subsubsection{Proving stability of the robot's trajectory}
The goal is to prove that the robot's trajectory will converge towards a stable
cycle.

Assume that the border is modelled by the function
\begin{equation}
x_{2}=h(x_{1})=20\left(1-\exp\left(-0.25x_{1}\right)\right)\label{eq:isobath}
\end{equation}

Given the starting position of the robot $\mathbf{x}=(x_{1},x_{2})^{\mathrm{T}}$,
the positions of the robot after each of the four transitions are
respectively given by the following functions~:
\begin{align*}
\mathbf{f}_{1}\left(\mathbf{x}\right)&=\left(x_{1}+25v,\:x_{2}\right)^{\mathrm{T}}\\
\mathbf{f}_{2}\left(\mathbf{x}\right)&=\left(x_{1},\:h\left(x_{1}\right)\right)^{\mathrm{T}}\\
\mathbf{f}_{3}\left(\mathbf{x}\right)&=\left(x_{1},\:x_{2}-7.5v\right)^{\mathrm{T}}\\
\mathbf{f}_{4}\left(\mathbf{x}\right)&=\left(h^{-1}(x_{2}),\:x_{2}\right)^{\mathrm{T}}
\end{align*}
where $v$ denotes the cruising speed of the robot.
Each of these functions can be modified to take model and sensor uncertainties
into account, as we consider that the robot is cruising in dead-reckoning.
Usually, these uncertainties grow with time if the robot does not
perform exteroceptive measurements.

Let us define the cycle function $\mathbf{f}$, given by Equation (\ref{eq:cycle}):
\begin{eqnarray}
\mathbf{f}\left(\mathbf{x}\right) & = & \mathbf{f}_{4}\circ\mathbf{f}_{3}\circ\mathbf{f}_{2}\circ\mathbf{f}_{1}\left(\mathbf{x}\right)\label{eq:cycle}
\end{eqnarray}
Proving stability of the cycle comes down to proving that the discrete-time
system $\mathbf{x}_{k+1}=\mathbf{f}\left(\mathbf{x}_{k}\right)$ is stable.

Figure \ref{fig:Isobath-rebounds} represents the evolution
of $[\mathbf{x}]$ over steps of integration, as well as the intermediate
steps. The initial box is $\left[\mathbf{x}_{0}\right]=\left(\left[1.5,6.5\right],\:\left[9.5,15.5\right]\right)^{\mathrm{T}}$, and we model the robot getting lost by adding an uncertainty $[\mathbf{u}_{i}]=\frac{\left\Vert \mathbf{f}_{i}\left(\mathbf{x}\right)-\mathbf{x}\right\Vert }{v}\cdot\left[-\varepsilon,\varepsilon\right]^{\times2}$ to each intermediate function $\mathbf{f}_{i}$, where $v=1\:m/s$ is the cruising speed of the robot and $\varepsilon=0.05\:m/s$ is the speed at which the robot is getting lost, \textit{i.e.} the inflating rate of the robot's estimated position.
\begin{figure}[h]
	\centering
	\includegraphics[width=0.7\columnwidth]{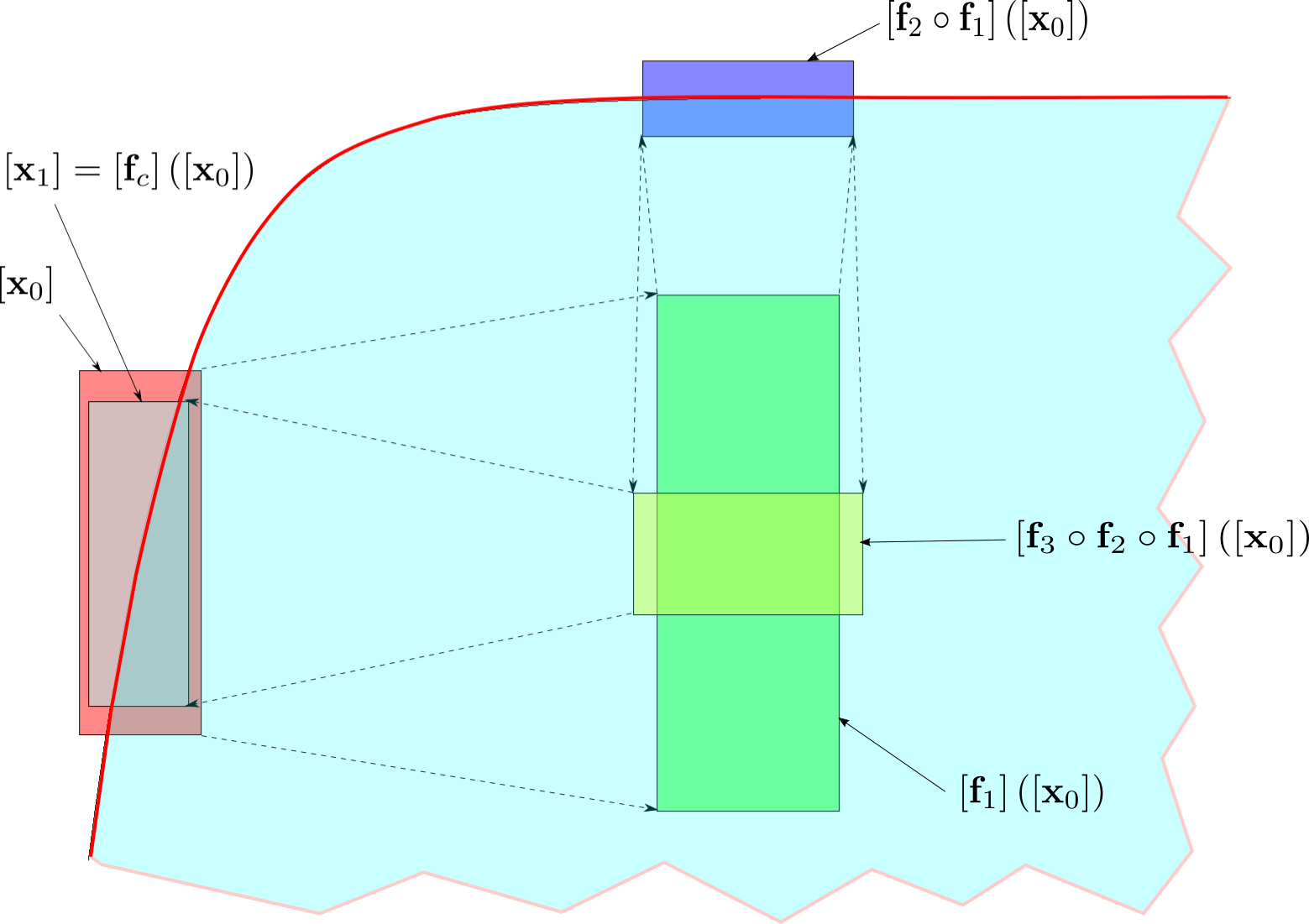}
	\caption{The robot rebounds on the border and follows a cyclic trajectory\label{fig:Isobath-rebounds}}
\end{figure}

Since we have $\left[\mathbf{x}_{1}\right]\in\left[\mathbf{x}_{0}\right]$,
we can conclude that the robot, equipped with the sensors named earlier
and controlled by the given automaton, is going to perform a stable
cycle in the lake.

\section{Conclusion\label{sec:conclusion}}

In this paper, we have proposed a new approach to prove Lyapunov
stability of a non-linear discrete-time system in a given neighbourhood $[\mathbf{x}_0]$, regardless of the system's uncertainties and the floating-point round-off errors. The principle is to
perform a simulation based on the interval centred form, from an
initial box $[\mathbf{x}_{0}]$ and until the current box $[\mathbf{x}_{k}]$
is strictly enclosed in $[\mathbf{x}_{0}]$. From the properties of
the centred form we have proposed, we are able to conclude that a system is stable. The method applies to a large class of
systems and can be used for arbitrary large dimensions, which is not
the case for methods based on linearisation which are very sensitive
to the dimension of the state vector. Additionally, contrary to the existing methods, ours is able to find a neighbourhood of radius $\delta$ inside which the system is stable.

The next step is to extend this approach to deal with continuous-time
systems described by differential equations. This extension will require
the introduction of interval integration methods \cite{chapoutot_rk, Revol05, robloc, wilczak2011}.

The interval community can be decomposed into two worlds~:
\begin{itemize}
	\item On the one hand, \emph{small intervals} \cite{Neumaier90} are used in combination with linearisation methods and
	are able to propagate efficiently small uncertainties such as those
	related to round-off errors. They can solve high dimensional
	problems and do not require any bisections.
	\item On the other hand, \emph{large intervals} are mainly used for global optimization \cite{Hansen},
	solving equations \cite{Kearfott96}, characterizing sets \cite{JaulinBook01}.
	They used together with contractors techniques, constraint propagation and
	bisections algorithms.
\end{itemize}
This work belongs to the first world and makes use of small intervals.
This is why the centred form is very efficient and why we do not
perform any bisection. Now, in practice, the approaches developed by both communities can be used jointly~: to solve real-life problems such as approximating basins of attraction,
we will have to combine the two methodologies appropriately.

\bibliographystyle{eptcs}
\bibliography{interval}
\end{document}